\newtheorem{theorem}{Theorem}
\newtheorem{example}{Example}
\newtheorem{lemma}{Lemma}
\newtheorem{proposition}{Proposition}
\newtheorem{remark}{Remark}
\begin{document}

\title{Epidemic Spread in Human Networks}
\author{Faryad Darabi Sahneh and Caterina Scoglio \\
Electrical \& Computer Engineering Department, Kansas State University,
Manhattan, KS 66506, USA\\
email: \{faryad,caterina\}@ksu.edu}
\maketitle

\begin{abstract}
One of the popular dynamics on complex networks is the epidemic spreading.
An epidemic model describes how infections spread throughout a network.
Among the compartmental models used to describe epidemics, the
Susceptible-Infected-Susceptible (SIS) model has been widely used. In the
SIS model, each node can be susceptible, become infected with a given
infection rate, and become again susceptible with a given curing rate. In
this paper, we add a new compartment to the classic SIS model to account for
human response to epidemic spread. Each individual can be infected,
susceptible, or alert. Susceptible individuals can become alert with an
alerting rate if infected individuals exist in their neighborhood. An
individual in the alert state is less probable to become infected than an
individual in the susceptible state; due to a newly adopted cautious
behavior. The problem is formulated as a continuous-time Markov process on a
general static graph and then modeled into a set of ordinary differential
equations using mean field approximation method and the corresponding
Kolmogorov forward equations. The model is then studied using results from
algebraic graph theory and center manifold theorem. We analytically show
that our model exhibits two distinct thresholds in the dynamics of epidemic
spread. Below the first threshold, infection dies out exponentially. Beyond
the second threshold, infection persists in the steady state. Between the
two thresholds, the infection spreads at the first stage but then dies out
asymptotically as the result of increased alertness in the network. Finally,
simulations are provided to support our findings. Our results suggest that
alertness can be considered as a strategy of controlling the epidemics which
propose multiple potential areas of applications, from infectious diseases
mitigations to malware impact reduction.
\end{abstract}

\section{Introduction}

Modeling human reactions to the spread of infectious disease is an important
topic in current epidemiology \cite{ferguson2007Nature,Funk2010JRSI}, and
has recently attracted a substantial attention \cite%
{theodorakopoulos2011ACC,kitchovitch2010ICCS,zeng2002JAMIA,bauch2004NAS,chen2006JMB,funk2009NAS,funk2010JTB,kiss2010MB}%
. However, few papers are available in the literature which consider the
human response to the epidemic in a systematic framework and the
contributions to the problem are still in an early stage. The challenges in
this topic concern not only how to model human reactions to the presence of
epidemics, but also how these reactions affect the spread of the disease
itself. In a general view, human response to an epidemic spread can be
categorized in the following three types: 1) Change in the system state. For
example, in a vaccination scenario individuals go directly from susceptible
state to recovered without going through infected state. 2) Change in system
parameters. For example, as in \cite{tracht2010PloS}, individuals might
choose to use masks. Those who use masks have a smaller infection rate
parameter, 3) Change in the contact topology. For example, due to the
perception of a serious danger, individuals reduce their contacts with other
people who can potentially be infectious \cite{Funk2010JRSI}.

Early results on epidemic modeling dates back to \cite{McKendrick1925EMS}.
In \cite{bailey1975Book} an epidemic model on a homogenous network was
studied. Later on, results for heterogeneous networks were reported in \cite%
{Vespignani2002EPJB}. Pastor-Satorras \emph{et. al.} \cite{Vespignani2001PRE}
studied epidemic spreading in scale free networks, showing that in these
networks the epidemic threshold vanishes with consequent concerns for the
robustness of many real complex systems. Wang \emph{et. al.} \cite%
{wang2003SRDS} provided the first result for a non-synthetic contact
topology, and studied the epidemic spread dynamic on a general static graph.
Through a local analysis of a mean-field discrete model, it was shown that
the epidemic threshold is directly related to the inverse of the spectral
radius of the adjacency matrix of the contact graph. More detailed proof was
provided in \cite{chakrabarti2008TISSEC}. Ganash \emph{et. al.} \cite%
{Ganesh2005INFOCOM} proved the same result without any mean-field
approximations. A continuous-time epidemic model was studied by Van Mieghem 
\emph{et. al.} \cite{van2009TN}, where a set of ordinary differential
equations was extracted through mean-field approximation of a continuous
time Markov process. The relation between the epidemic threshold and the
spectral radius was rigorously proved and further insights about the steady
state infection probabilities were analytically derived. Preciado and
Jadbabaie \cite{preciado2010CDC} studied the epidemic spread on geometric
random networks and then in \cite{preciado2010arXiv}, they investigated the
epidemic threshold on a general contact graph with respect to the network
structural information.

A good review on existing results in the literature where the human behavior
is taken into account for epidemic modeling can be found in \cite%
{Funk2010JRSI}. Poletti \emph{et. al. }\cite{poletti2009JTB} developed a
population-based model where susceptible individuals could choose between
two behaviors in response to presence of infection. Funk \emph{et. al.} \cite%
{funk2009NAS} showed that awareness of individuals about the presence of a
disease can help reducing the size of the epidemic outbreak. In their paper,
awareness and disease have interconnected dynamics. Theodorakopoulos \emph{%
et. al.} \cite{theodorakopoulos2011ACC} formulated the problem so that
individuals could make decision based on the perception of the epidemic
size. Most of the existing results are suitable for a society of well-mixed
individuals, since the contact graph is usually considered to be homogeneous
(i.e. all nodes have the same degree). To the authors' knowledge, the study
of the human response in a realistic network of individuals with a general
contact graph has not been reported so far.

In this paper, we model the human response to epidemic in the following way.
A new compartment is considered in addition to susceptible and infected
states. A susceptible individual becomes alert with some probability rate if
surrounded by infected individuals. An alert node gets infected with a lower
rate compared to a susceptible node does with the same number of infected
neighbors. The contribution of this paper is two-fold. 1) Unlike most of the
previous results, no homogeneity assumption is made on the contact network
and the human-disease interaction in this paper is modeled on a general
contact graph. 2) We show through analytical approaches that two distinct
thresholds exist. The two are explicitly computed. To the authors' knowledge
the existence of two distinct thresholds is reported for the first time in
this paper, providing a fundamental progress on previous results.
Additionally, this result has the potential to be applied to mitigate
epidemics in several different complex systems, from human and animal
infectious diseases, to malware propagation in computer and sensor networks.

The rest of the paper is organized as follows. In Section \ref%
{Sec-Background}, some backgrounds on graph theory, center manifold method,
and the N-Intertwined SIS model (developed in \cite{van2009TN}) are
recalled. Section \ref{Sec-Problem Statement} is devoted to the problem
formulation and model derivations. Stability analysis results of the model
are provided in Section \ref{Sec-Main}. Finally, results are examined
through numerical simulations in Section \ref{Sec-Simulation Results}.

\section{Preliminarily and Background\label{Sec-Background}}

\subsection{Graph Theory}

Graph theory (see \cite{Diestel97Book}\textbf{) }is widely used for
representing the contact topology in an epidemic network. Let $\mathcal{G}%
=\left\{ \mathcal{V},\mathcal{E}\right\} $ represent a directed graph, and $%
\mathcal{V=}\left\{ 1,...,N\right\} $\ denote the set of vertices. Every
individual is represented by a vertex. The set of edges is denoted as $%
\mathcal{E\subset V\times V}$. An edge is an ordered pair $(i,j)\in \mathcal{%
E}$ if individual $j$ can be directly infected from individual $i$. In this
paper, we assume that there is no self loop in the graph, that is, $%
(i,i)\notin \mathcal{E}$. $\mathcal{N}_{i}=\left\{ j\in \mathcal{V\mid }%
(j,i)\in \mathcal{E}\right\} $ denotes the neighborhood set of vertex $i$.
Graph $\mathcal{G}$\ is said to be undirected if for any edge $(i,j)\in 
\mathcal{E}$, edge $(j,i)\in \mathcal{E}$. A path is referred by the
sequence of its vertices. A path $\mathcal{P}$ of length $k$ between $v_{0}$%
, $v_{k}$\ is the sequence $\left\{ v_{0},...,v_{k}\right\} $ where $%
(v_{i-1},v_{i})\in \mathcal{E}$ for $i=1,...,k$. Directed graph $\mathcal{G}$%
\ is strongly connected if any two vertices are linked with a path in $%
\mathcal{G}$. $\mathcal{A=}\left[ a_{ij}\right] \in 
\mathbb{R}
^{N\times N}$ denotes the adjacency matrix of $\mathcal{G}$, where $a_{ij}=1$
if and only if $(i,j)\in \mathcal{E}$ else $a_{ij}=0$. The largest magnitude
of the eigenvalues of adjacency matrix $A$ is called spectral radius of $A$
and is denoted by $\rho (A)$.

\subsection{Center Manifold Theory\label{Sec: CMT}}

Linearization is a useful technique for local stability analysis of
nonlinear systems. However, in the cases where linearization results in a
linear system with some negative real part and some zero real part
eigenvalues, the linearization method fails. In these cases, the local
stability analysis can be performed by analyzing a nonlinear system of the
order exactly equal to the number of eigenvalues with zero real parts. This
method is known as center manifold method. In this section, we have a quick
review on center manifold theory. More details can be found in \cite%
{Carr1981Book} and \cite{khalil2002Book}.

For $z_{s}\in 
\mathbb{R}
^{n_{s}}$ and $z_{c}\in 
\mathbb{R}
^{n_{c}}$, consider the following system%
\begin{eqnarray}
\dot{z}_{s} &=&A_{s}z_{s}+g_{s}(z_{c},z_{s})  \label{zc} \\
\dot{z}_{c} &=&A_{c}z_{c}+g_{c}(z_{c},z_{s}),  \label{zs}
\end{eqnarray}%
where the eigenvalues of $A_{s}\in 
\mathbb{R}
^{n_{s}\times n_{s}}$ and $A_{c}\in 
\mathbb{R}
^{n_{c}\times n_{c}}$ have negative and zero real parts, respectively. The
functions $g_{c}$ and $g_{s}$ are twice continuously differentiable and
satisfy the conditions%
\begin{equation}
g_{i}(\mathbf{0},\mathbf{0})=\mathbf{0},~\nabla g_{i}(\mathbf{0},\mathbf{0})=%
\mathbf{0},~i\in \{s,c\},
\end{equation}%
where $\mathbf{0}$ is a vector or matrix of zeros with appropriate
dimensions. There exists a function $h:%
\mathbb{R}
^{n_{s}}\rightarrow 
\mathbb{R}
^{n_{c}}$ satisfying%
\begin{equation}
h(\mathbf{0})=\mathbf{0},~\nabla h(\mathbf{0})=\mathbf{0},
\end{equation}%
that $z_{s}=h(z_{c})$ is an invariant manifold (see \cite{khalil2002Book}
for the definition) for (\ref{zc}) and (\ref{zs}) near the origin. The
dynamic system (\ref{zc}) and (\ref{zs}) can be studied through the reduced
system%
\begin{equation}
\overset{\cdot }{\hat{z}}_{c}=A_{c}\hat{z}_{c}+g_{c}(\hat{z}_{c},h(\hat{z}%
_{c})).
\end{equation}%
The invariant manifold $z_{s}=h(z_{c})$ is a center manifold for the system (%
\ref{zc}) and (\ref{zs}), i.e., every trajectory of (\ref{zc}) and (\ref{zs}%
) with the initial condition $z_{c}(0)=\hat{z}_{c}(0)$ and $z_{s}(0)=h(\hat{z%
}_{c}(0))$ satisfies $z_{c}(t)=\hat{z}_{c}(t)$ and $z_{s}(t)=h(\hat{z}%
_{c}(t))$. In addition, small deviation from the center manifold is
exponentially attracted, i.e., if $\left\Vert z_{s}(0)-h(\hat{z}%
_{c}(0))\right\Vert $ is small enough, then $\left\Vert z_{s}(t)-h(\hat{z}%
_{c}(t))\right\Vert $ will go to zero exponentially.

\subsection{N-Intertwined SIS Model for Epidemic Spread\label{Sec: NI_Model}}

We have built our modeling based on a newly proposed continuous-time model
for epidemic spread on a graph. Van Mieghem \emph{et. al.} \cite{van2009TN}
derived a set of ordinary differential equations, called the N-intertwined
model, which represents the time evolution of the probability of infection
for each individual. The only approximation for the N-intertwined model
corresponds to the application of the mean-field theory.

Consider a network of $N$ individuals. Denote the infection probability of
the $i$-th individual by $p_{i}\in \lbrack 0,1]$. Assume that the disease is
characterized by infection rate $\beta _{0}\in 
\mathbb{R}
^{+}$ and cure rate $\delta \in 
\mathbb{R}
^{+}$. Furthermore, assume that the contact topology is represented by a
static graph. The N-intertwined model proposed in \cite{van2009TN}\ is

\begin{equation}
\dot{p}_{i}=\beta _{0}(1-p_{i})\sum_{j\in \mathcal{N}_{i}}a_{ij}p_{j}-\delta
p_{i},~i\in \{1,...,N\},  \label{NInt_Model}
\end{equation}%
where $a_{ij}=1$ if individual $j$ is a neighbor of individual $i$,
otherwise $a_{ij}=0$.

\begin{proposition}
\label{Prop: Die out}Consider the N-intertwined model (\ref{NInt_Model}).
Initial infection will die out exponentially if the infection strength $\tau
\triangleq \frac{\beta _{0}}{\delta }$ satisfies%
\begin{equation}
\tau \triangleq \frac{\beta _{0}}{\delta }\leq \frac{1}{\rho (A)},
\end{equation}%
where $\rho (A)$ is the spectral radius of the adjacency matrix $A$ of the
contact graph.
\end{proposition}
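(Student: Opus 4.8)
The plan is to linearize the system around the infection-free equilibrium and show that the linearization is asymptotically stable under the stated spectral condition, then bound the nonlinear dynamics by the linear ones to conclude exponential decay of the actual infection probabilities. First I would observe that $p = \mathbf{0}$ is an equilibrium of (\ref{NInt_Model}), and compute the Jacobian there. Since the term $\beta_0(1-p_i)\sum_j a_{ij}p_j$ has gradient $\beta_0 \sum_j a_{ij}p_j$ with respect to the $p_i$ dependence (which vanishes at $p=\mathbf{0}$) plus $\beta_0(1-p_i)a_{ij}$ in the $p_j$ directions, the linearization collapses neatly to
\begin{equation}
\dot{p} = (\beta_0 A - \delta I)\,p,
\end{equation}
so the relevant matrix is $M \triangleq \beta_0 A - \delta I$.

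Next I would relate the eigenvalues of $M$ to $\rho(A)$. Because $A$ is a nonnegative matrix, the Perron--Frobenius theory guarantees that its spectral radius $\rho(A)$ is itself an eigenvalue, and it is the largest real part among all eigenvalues of $A$. Since $M = \beta_0 A - \delta I$ is just a shift-and-scale of $A$, its eigenvalues are $\beta_0 \lambda_k - \delta$ where $\lambda_k$ are the eigenvalues of $A$; hence the largest real part of any eigenvalue of $M$ equals $\beta_0 \rho(A) - \delta$. The condition $\tau = \beta_0/\delta \le 1/\rho(A)$ is exactly $\beta_0 \rho(A) - \delta \le 0$, which makes every eigenvalue of $M$ have nonpositive real part, and strictly negative when the inequality is strict.

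To get a clean exponential decay claim I would prefer to avoid relying on the linearization alone (which only certifies local behavior), and instead use a direct comparison argument valid on the whole cube $[0,1]^N$. The key observation is that $1 - p_i \le 1$, so from (\ref{NInt_Model}) we have the componentwise differential inequality
\begin{equation}
\dot{p}_i \le \beta_0 \sum_{j} a_{ij} p_j - \delta p_i,
\end{equation}
i.e. $\dot{p} \le M p$ entrywise, with $M$ essentially nonnegative (Metzler). Taking the left Perron eigenvector $w \ge 0$ of $A$ (so $w^\top A = \rho(A) w^\top$) and defining the scalar Lyapunov-type function $V(p) = w^\top p \ge 0$, I would differentiate to obtain $\dot{V} \le w^\top M p = (\beta_0 \rho(A) - \delta)\, w^\top p = (\beta_0 \rho(A) - \delta) V$. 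Under the threshold condition this gives $V(t) \le V(0)\, e^{(\beta_0 \rho(A) - \delta) t}$, forcing $V$, and hence every $p_i$ since the entries of $w$ and $p$ are nonnegative, to decay exponentially.

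The main obstacle is the boundary case of equality $\tau = 1/\rho(A)$, where the dominant eigenvalue of $M$ has zero real part and the linear decay rate degenerates, so strict exponential decay from the linear comparison is not immediate; here I expect one must exploit the $-\beta_0 p_i \sum_j a_{ij} p_j$ term that the inequality $1-p_i\le1$ discarded, using the center manifold machinery recalled in Section~\ref{Sec: CMT} (or a strict Lyapunov refinement) to show the nonlinear terms push the trajectory strictly inward and still yield convergence. A secondary technical point is ensuring the Perron eigenvector $w$ is strictly positive, which requires irreducibility of $A$ (strong connectivity of $\mathcal{G}$); absent that, I would decompose into strongly connected components or simply invoke the nonnegativity of $w$ together with the spectral bound, which suffices for the weak inequality and the decay of $V$.
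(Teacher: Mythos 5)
The paper does not actually prove this proposition: it is recalled verbatim from the N-intertwined SIS literature (\cite{van2009TN}) as background, so there is no in-paper argument to compare against. Judged on its own, your proposal is a correct and essentially standard proof for the strict-inequality case, and it is the same spectral mechanism the authors themselves invoke later (the Hurwitz condition on $\beta_{eq}A-\delta I$ in Theorem~\ref{Theorem: AS}). The comparison step is sound: $1-p_i\le 1$ gives $\dot{\mathbf{p}}\le(\beta_0A-\delta I)\mathbf{p}$ entrywise, and since the left Perron vector $w$ is nonnegative, $V=w^{\top}\mathbf{p}$ satisfies $\dot V\le(\beta_0\rho(A)-\delta)V$, which yields exponential decay of all $p_i$ whenever $\tau<1/\rho(A)$ and $w>0$. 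Your two flagged caveats are exactly the right ones, but note their different statuses. Irreducibility is a genuine but repairable gap (decompose into strongly connected components, or compare $\mathbf{p}(t)$ directly with the solution of $\dot{\mathbf{y}}=(\beta_0A-\delta I)\mathbf{y}$ via the Kamke/cooperative-systems comparison principle, which avoids $w$ entirely). The equality case $\tau=1/\rho(A)$, however, cannot be rescued in the way you suggest: there the dominant eigenvalue of $\beta_0A-\delta I$ is exactly zero, the quadratic term $-\beta_0p_i\sum_ja_{ij}p_j$ only produces algebraic (typically $O(1/t)$) decay along the center direction, and no center-manifold or strict-Lyapunov refinement will restore an exponential rate. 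So your proof establishes the proposition for $\tau<1/\rho(A)$; at equality the conclusion as stated (exponential die-out) is simply too strong, which is a defect of the recalled statement rather than of your argument --- and indeed the paper itself quietly switches to the strict inequality when it uses this result in Theorem~\ref{Theorem: ES}.
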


\begin{remark}
The value $\tau _{c}=\frac{1}{\rho (A)}$ is usually referred to as the
epidemic threshold. For any infection strength $\tau >\tau _{c}$, infection
will persist in the steady state. The following result discusses the steady
state values for infection probabilities.
\end{remark}

\begin{proposition}
\label{Prop: NI_SS}If the infection strength is above the epidemic
threshold, the steady state values of the infection probabilities, denoted
by $p_{i}^{ss}$ for the $i$-th individual, is the non-trivial solution of
the following set of equations%
\begin{equation}
\frac{\beta _{0}}{\delta }\sum_{j\in \mathcal{N}_{i}}a_{ij}p_{j}^{ss}=\frac{%
p_{i}^{ss}}{1-p_{i}^{ss}},~~i\in \{1,...,N\}.  \label{NI_pss}
\end{equation}
\end{proposition}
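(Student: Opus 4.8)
The plan is to read the steady-state equation directly off the dynamics and then to argue that, above threshold, the physically relevant equilibrium is the non-trivial one. Setting $\dot{p}_{i}=0$ in the N-intertwined model~(\ref{NInt_Model}) gives $\beta_{0}(1-p_{i}^{ss})\sum_{j\in\mathcal{N}_{i}}a_{ij}p_{j}^{ss}=\delta p_{i}^{ss}$ for every $i$. I first observe that $p_{i}^{ss}=1$ is impossible: it would force the left-hand side to vanish while the right-hand side equals $\delta>0$. Since $0\le p_{i}^{ss}<1$, I may divide by $\delta(1-p_{i}^{ss})$ to obtain~(\ref{NI_pss}). This algebraic step is routine; the content of the proposition lies in the word \emph{non-trivial}.

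The equations~(\ref{NI_pss}) always admit the trivial solution $p^{ss}=\mathbf{0}$, so I must justify that the steady state reached when $\tau>\tau_{c}$ is not this one. To this end I would show that $\mathbf{0}$ is unstable above threshold. Linearizing~(\ref{NInt_Model}) about the origin yields the Jacobian $\beta_{0}A-\delta I$, whose eigenvalues are $\beta_{0}\lambda_{k}-\delta$, with $\lambda_{k}$ the eigenvalues of $A$. The largest real part is $\beta_{0}\rho(A)-\delta$, which is strictly positive precisely when $\tau=\beta_{0}/\delta>1/\rho(A)=\tau_{c}$. Hence the origin is an unstable equilibrium, and a trajectory starting from any nonzero infection configuration cannot converge to it.

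It remains to exhibit a non-trivial equilibrium in the interior of the unit hypercube and to identify it as the attracting steady state. Here I would exploit the cooperative (monotone) structure of the model: the off-diagonal Jacobian entries $\partial\dot{p}_{i}/\partial p_{j}=\beta_{0}(1-p_{i})a_{ij}$ are non-negative on $[0,1]^{N}$, and the hypercube is forward invariant because $\dot{p}_{i}\ge 0$ on the face $p_{i}=0$ while $\dot{p}_{i}=-\delta<0$ on the face $p_{i}=1$. For a strongly connected contact graph the matrix $A$ is irreducible and non-negative, so Perron-Frobenius theory supplies a positive dominant eigenvector; a monotone-iteration argument, or equivalently a fixed-point argument on the map implicitly defined by~(\ref{NI_pss}), then produces a unique equilibrium $p^{ss}$ with all components in $(0,1)$, which monotone-systems theory identifies as the global attractor of every nonzero initial condition.

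I expect the existence and uniqueness of this interior fixed point to be the main obstacle. Everything preceding it is elementary algebra or a standard linearization, whereas pinning down that exactly one non-trivial solution exists and that the dynamics select it requires the Perron-Frobenius and monotone-systems machinery above, where the structural hypotheses on $A$ (non-negativity and irreducibility) do the real work.
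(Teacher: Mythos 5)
The paper offers no proof of Proposition \ref{Prop: NI_SS}: it is recalled as background from the N-intertwined model of Van Mieghem \emph{et al.} \cite{van2009TN}, so there is no in-paper argument to compare yours against. Judged on its own, your outline follows the standard route for SIS-type models (essentially a Lajmanovich--Yorke argument): the algebraic reduction of $\dot{p}_i=0$ to (\ref{NI_pss}) is correct, including the observation that $p_i^{ss}=1$ is excluded, and invoking irreducibility of $A$, Perron--Frobenius, and monotone-systems theory is exactly the machinery that establishes existence, uniqueness, and attractivity of the interior equilibrium. You are also right that this last part is where the real work lies.

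One step, as written, is a non sequitur: from ``the origin is an unstable equilibrium'' you conclude that ``a trajectory starting from any nonzero infection configuration cannot converge to it.'' Instability alone does not rule this out --- an unstable equilibrium can still have a nontrivial stable manifold (a saddle), and trajectories on that manifold do converge to it. What saves the conclusion is the cooperative structure you only bring in afterwards: for an irreducible cooperative system whose Jacobian at the origin has a positive dominant eigenvalue, one compares the flow from below with a small multiple of the positive Perron eigenvector and shows every component of the infection is pushed away from zero. So the claim is true, but it must be derived from the monotonicity/comparison argument, not from linear instability by itself. You should also state explicitly the hypothesis you are implicitly adding --- strong connectivity of the contact graph, i.e., irreducibility of $A$ --- since without it both the uniqueness of the nontrivial solution of (\ref{NI_pss}) and the exclusion of partial die-out can fail; the proposition as recalled in the paper does not list this assumption, but your (and any) proof needs it.
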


\section{Model Development\label{Sec-Problem Statement}}

In this paper, we add a new compartment to the classic SIS\ model for
epidemic spread modeling to propose a Susceptible-Alert-Infected-Susceptible
(SAIS) model. The contact topology in this formulation is considered as a
general static graph. Each node of the graph represents an individual and a
link between two nodes determines the contact between the two individuals.
Each node is allowed to be in one of the three states \emph{"S: susceptible"}%
, \emph{"I: infected"}, and \emph{"A: alert"}. A susceptible individual
becomes infected by the infection rate $\beta _{0}$ times the number of its
infected neighbors. An infected individual recovers back to the susceptible
state by the curing rate $\delta $. An individual can observe the states of
its neighbors. A susceptible individual might go to the alert state if
surrounded by infected individuals. Specifically, a susceptible node becomes
alert with the alerting rate $\kappa \in 
\mathbb{R}
^{+}$ times the number of infected neighbors. An alert individual can get
infected in a process similar to a susceptible individual but with a reduced
infection rate $0\leq \beta _{a}<\beta _{0}$. We assume that transition from
an alert individual to a susceptible state is much slower than other
transitions. Hence, in our modeling setup, an alert individual never goes
directly to the susceptible state. The compartmental transitions of a node
with one single infected neighbor are depicted in Fig. \ref{sais.emf}.

\begin{figure}[h!]
\begin{center}
\leavevmode
\includegraphics[scale=0.4, trim=0cm 0cm 0cm 1.5cm, clip]{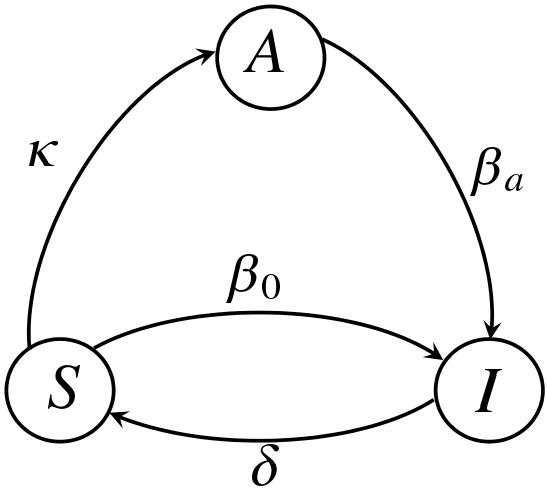}
\end{center}
\caption{The infected population fraction in Example. (a) SIS model. (b) SAIS model with. (c) SAIS model with . The Monte-Carlo simulation results are shown in blue.}
\label{sais.emf}
\end{figure}

The epidemic spread dynamic is modeled as a continuous-time Markov process.
For each node $i\in \{1,...,N\}$, define a random variable $%
X_{i}:\{S,I,A\}\rightarrow \{0,1,2\}$. Denote $X_{i}^{t}$ a measure of the
random variable at time $t$ for node $i$. The epidemic spread dynamics is
modeled as the following continuous-time Markov process:%
\begin{align}
\Pr [X_{i}^{t+\Delta t}& =1|X_{i}^{t}=0]=\beta _{0}\Delta t\sum_{j\in 
\mathcal{N}_{i}}1_{\{X_{j}^{t}=1\}}+o(\Delta t),  \notag \\
\Pr [X_{i}^{t+\Delta t}& =0|X_{i}^{t}=1]=\delta \Delta t+o(\Delta t),  \notag
\\
\Pr [X_{i}^{t+\Delta t}& =2|X_{i}^{t}=0]=\kappa \Delta t\sum_{j\in \mathcal{N%
}_{i}}1_{\{X_{j}^{t}=1\}}+o(\Delta t),  \notag \\
\Pr [X_{i}^{t+\Delta t}& =1|X_{i}^{t}=2]=\beta _{a}\Delta t\sum_{j\in 
\mathcal{N}_{i}}1_{\{X_{j}^{t}=1\}}+o(\Delta t),  \label{Makov1_4}
\end{align}%
for $i\in \{1,...,N\}$. In (\ref{Makov1_4}), $\Pr [\cdot ]$ denotes
probability, $\Delta t>0$ is a time step, and $1_{\{\mathcal{X}\}}$ is one
if $\mathcal{X}$\ is true and zero otherwise. A function $f(\Delta t)$ is
said to be $o(\Delta t)$ if $\lim_{\Delta t\rightarrow 0}\frac{f(\Delta t)}{%
\Delta t}=0$.

A common approach for studying a continuous-time Markov process is to derive
the corresponding Kolmogorov forward (backward) differential equations (see 
\cite{stroock2005Book} and \cite{Piet2006Book}). As can be seen from the
above equations, the conditional transition probabilities of a node are
expressed in terms of the actual state of its neighboring nodes. Therefore,
each state of the Kolmogorov differential equations corresponding to the
Markov process (\ref{Makov1_4}) will be the probability of being in a
specific configuration. In this case, we will end up with a set of first
order ordinary differential equations of the order $3^{N}$. Hence, the
analysis will become dramatically complicated as the network size grows. In
addition, it is more desirable to study the probability that each individual
is susceptible, infected, or alert. Using a proper mean-field approximation,
it is possible to express the transition probabilities in terms of infection
probabilities of the neighbors. Specifically, the term $1_{\{X_{j}^{t}=1\}}$
is replaced with $\Pr [X_{j}^{t}=1]$ in (\ref{Makov1_4}). Hence, the
following new stochastic process is obtained:%
\begin{align}
\Pr [X_{i}^{t+\Delta t}& =1|X_{i}^{t}=0]=\beta _{0}\Delta t\sum_{j\in 
\mathcal{N}_{i}}\Pr [X_{j}^{t}=1]+o(\Delta t),  \notag \\
\Pr [X_{i}^{t+\Delta t}& =0|X_{i}^{t}=1]=\delta \Delta t+o(\Delta t),  \notag
\\
\Pr [X_{i}^{t+\Delta t}& =2|X_{i}^{t}=0]=\kappa \Delta t\sum_{j\in \mathcal{N%
}_{i}}\Pr [X_{j}^{t}=1]+o(\Delta t),  \notag \\
\Pr [X_{i}^{t+\Delta t}& =1|X_{i}^{t}=2]=\beta _{a}\Delta t\sum_{j\in 
\mathcal{N}_{i}}\Pr [X_{j}^{t}=1]+o(\Delta t).  \label{Markov_MF}
\end{align}

Define a new state $x_{i}\triangleq \lbrack s_{i},p_{i},q_{i}]^{T}$, where $%
s_{i}$, $p_{i}$, and $q_{i}$ denote the probabilities of individual $i$ to
be susceptible, infected, and alert, respectively. The Kolmogorov forward
differential equations of the stochastic process (\ref{Markov_MF}) can now
be found as%
\begin{equation}
\dot{x}_{i}=\Theta _{i}^{T}x_{i},~i\in \{1,...,N\},  \label{Kolmogorov}
\end{equation}%
where%
\begin{equation}
\Theta _{i}\triangleq 
\begin{bmatrix}
-\delta & 0 & \delta \\ 
\beta _{a}\sum\limits_{j\in \mathcal{N}_{i}}a_{ij}p_{j} & -\beta
_{a}\sum\limits_{j\in \mathcal{N}_{i}}a_{ij}p_{j} & 0 \\ 
\beta _{0}\sum\limits_{j\in \mathcal{N}_{i}}a_{ij}p_{j} & \kappa
\sum\limits_{j\in \mathcal{N}_{i}}a_{ij}p_{j} & -(\beta _{0}+\kappa
)\sum\limits_{j\in \mathcal{N}_{i}}a_{ij}p_{j}%
\end{bmatrix}%
\end{equation}%
is the infinitesimal transition matrix. One property of the dynamic system (%
\ref{Kolmogorov}) is that $s_{i}+p_{i}+q_{i}$ is a preserved quantity.
Hence, the states $s_{i}$, $p_{i}$, and $q_{i}$ are not independent.
Omitting $s_{i}$ in (\ref{Kolmogorov}), the following set of differential
equations is obtained:%
\begin{align}
\dot{p}_{i}& =\beta _{0}(1-p_{i}-q_{i})\sum_{j\in \mathcal{N}%
_{i}}a_{ij}p_{j}+\beta _{a}q_{i}\sum_{j\in \mathcal{N}_{i}}a_{ij}p_{j}-%
\delta p_{i},  \label{dp} \\
\dot{q}_{i}& =\kappa (1-p_{i}-q_{i})\sum_{j\in \mathcal{N}%
_{i}}a_{ij}p_{j}-\beta _{a}q_{i}\sum_{j\in \mathcal{N}_{i}}a_{ij}p_{j},
\label{dq}
\end{align}%
for $i\in \{1,...,N\}$.

\begin{remark}
As can be seen using a mean-field approximation, the dimension of the
differential equations is reduced from $3^{N}$ to $2N$. However, some
information is definitely lost and there is some error. For example, the
Markov process (\ref{Makov1_4}) exhibits an absorbing state. However, no
absorbing state can be observed based on the equations (\ref{dp}) and (\ref%
{dq}). In addition, as is discussed in \cite{van2009TN}, the solution from
the mean field approximation is an upper-bound for the actual model.
\end{remark}

\section{Behavioral Study of SAIS Epidemic Spread Model\label{Sec-Main}}

In this section, the dynamic system (\ref{dp}) and (\ref{dq}) derived in the
previous section is analyzed. It is shown that alertness decreases the size
of infection. In addition, in an SAIS\ epidemic model, the response of the
system can be categorized in three separate regions. These three regions are
identified with two distinct thresholds $\tau _{c}^{1}$ and $\tau _{c}^{2}$.
Below the first threshold, the epidemic dies out exponentially. Beyond the
second threshold, the epidemic persists in the steady state. Between $\tau
_{c}^{1}$ and $\tau _{c}^{2}$, the epidemic spreads at the first stage but
then dies out asymptotically as the result of increased alertness in the
network.

\subsection{Comparison between SAIS and SIS}

In this section, the SAIS model and the SIS model are compared in the sense
of infection probabilities of the individuals. Specifically, we are
interested to compare $p_{i}(t)$, the response of (\ref{dp}) and (\ref{dq}),
with infection probability $p_{i}^{\prime }(t)$ in the N-intertwined SIS
model, which is the solution of the system%
\begin{equation}
\dot{p}_{i}^{\prime }=\beta _{0}(1-p_{i}^{\prime })\sum_{j\in \mathcal{N}%
_{i}}a_{ij}p_{j}^{\prime }-\delta p_{i}^{\prime }.  \label{pidot}
\end{equation}%
It is shown that alertness decreases the probability of infection for each
individual. This result is stated as the following theorem.

\begin{theorem}
\label{Theorem: upperbound}Starting with the same initial conditions $%
p_{i}(t_{0})=p_{i}^{\prime }(t_{0})$, $i=\{1,...,N\}$, the infection
probabilities of individuals\ in SIS\ model (\ref{pidot}) always dominate
those of the SAIS\ model (\ref{dp}) and (\ref{dq}), i.e.,%
\begin{equation}
p_{i}(t)\leq p_{i}^{\prime }(t),i=\{1,...,N\}~~\forall t\in \lbrack
t_{0},\infty ).
\end{equation}
\end{theorem}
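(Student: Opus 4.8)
The plan is to prove the componentwise inequality $p_i(t) \le p_i'(t)$ by a comparison (differential-inequality) argument, exploiting the fact that the SAIS dynamics for $p_i$ differ from the SIS dynamics only through the extra alert-state term, which can only reduce the infection rate. First I would observe that in the SAIS equation \eqref{dp}, the susceptible fraction available for infection is $1 - p_i - q_i$ rather than $1 - p_i$, and the alert fraction $q_i$ contributes with the reduced rate $\beta_a < \beta_0$. Rewriting \eqref{dp} as
\begin{equation}
\dot p_i = \beta_0(1-p_i)\sum_{j\in\mathcal N_i} a_{ij} p_j - (\beta_0 - \beta_a)\, q_i \sum_{j\in\mathcal N_i} a_{ij} p_j - \delta p_i,
\end{equation}
exhibits the SAIS vector field as the SIS vector field (evaluated at the same $p$) minus the manifestly nonnegative correction term $(\beta_0-\beta_a) q_i \sum_j a_{ij} p_j$, using that $q_i \ge 0$, $p_j \ge 0$, and $\beta_0 > \beta_a$. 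This is the structural heart of the argument.

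Next I would set up the comparison formally. Define $e_i(t) \triangleq p_i'(t) - p_i(t)$ with $e_i(t_0)=0$. The goal is to show $e_i(t) \ge 0$ for all $t \ge t_0$ and all $i$. Subtracting the two systems and using the rewrite above, I expect to obtain, for each $i$, a differential inequality of the form $\dot e_i \ge (\text{linear terms in the } e_j \text{ with nonnegative off-diagonal coupling}) - \delta e_i + (\text{nonnegative source})$, where the nonnegative source is precisely the correction term evaluated along the SAIS trajectory. The coupling through $\beta_0(1-p_i')\sum_j a_{ij} e_j$ and the cross terms must be organized so that the system governing the $e_i$ is (quasi-)monotone, i.e., has nonnegative off-diagonal dependence on the $e_j$. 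This is the standard hypothesis (a Kamke–Müller / cooperative-system condition) under which a comparison principle guarantees that a nonnegative initial condition together with a nonnegative forcing yields a nonnegative solution for all time.

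The cleanest route is to invoke the theory of cooperative (order-preserving) dynamical systems, or equivalently a vector differential-inequality lemma: if $\dot e \ge F(t,e)$ with $e(t_0)\ge 0$, $F$ quasi-monotone increasing, and the comparison system $\dot u = F(t,u)$, $u(t_0)=0$ has the trivial solution $u\equiv 0$ as a subsolution, then $e(t)\ge 0$. I would verify quasi-monotonicity by checking that $\partial \dot e_i / \partial e_j \ge 0$ for $i\ne j$ on the invariant region where all probabilities lie in $[0,1]$; the off-diagonal terms are of the form $\beta_0(1-p_i')a_{ij} \ge 0$, which is where invariance of $[0,1]^N$ (so that $1-p_i' \ge 0$) must first be established. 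A secondary technical point is to confirm that the relevant solutions stay in the biologically meaningful region $0\le p_i, q_i, p_i+q_i \le 1$ for all time, which keeps every coefficient sign under control.

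The main obstacle I anticipate is not the sign of the isolated correction term — that is immediate — but rather handling the coupling: because $\dot p_i$ depends on the neighbors' $p_j$, one cannot compare the equations node-by-node in isolation, and the naive scalar comparison fails. The inequality must be propagated simultaneously across all nodes, so the argument genuinely requires the multidimensional comparison principle and a careful verification that the coupling preserves the partial order (the cooperativity condition). A subtle edge case to address is the possibility of a trajectory touching the boundary $e_i = 0$ with some neighbors strictly positive; the quasi-monotone structure is exactly what rules out $e_i$ dipping below zero there, since at such a contact point $\dot e_i \ge \beta_0(1-p_i')\sum_j a_{ij} e_j + (\text{nonnegative source}) \ge 0$. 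Establishing invariance of the region and the quasi-monotone inequality rigorously is therefore the crux; once those are in place, the conclusion $p_i(t)\le p_i'(t)$ follows from the comparison theorem.
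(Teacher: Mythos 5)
Your proposal is correct and follows essentially the same route as the paper: the paper's proof opens with exactly your rewrite of (\ref{dp}) as the SIS vector field minus the nonnegative term $(\beta_0-\beta_a)q_i\sum_{j\in\mathcal{N}_i}a_{ij}p_j$, and then propagates the inequality through the nonnegative coupling $\beta_0(1-p_i')a_{ij}$ by a first-crossing-time contradiction, which is the hand-made version of the quasi-monotone comparison principle you invoke. The only difference is packaging: the paper argues directly at the first time $t_1$ where some $p_i(t_1)=p_i'(t_1)$ with $\dot p_i(t_1)>\dot p_i'(t_1)$ and derives $\dot p_i(t_1)\le\dot p_i'(t_1)$, implicitly using the sign facts ($q_i\ge 0$, $p_j\ge 0$, $1-p_i'\ge 0$) that you rightly flag as requiring an invariance argument.
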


\begin{proof}
Rewrite the equations (\ref{dp}) as%
\begin{equation}
\dot{p}_{i}=\beta _{0}(1-p_{i})\sum_{j\in \mathcal{N}_{i}}a_{ij}p_{j}-(\beta
_{0}-\beta _{a})q_{i}\sum_{j\in \mathcal{N}_{i}}a_{ij}p_{j}-\delta p_{i}.
\label{p_dot2}
\end{equation}%
Starting with the same initial conditions $p_{i}(t_{0})=p_{i}^{\prime
}(t_{0})$, it is concluded that%
\begin{equation}
p_{i}(t_{0})=p_{i}^{\prime }(t_{0})\Rightarrow \dot{p}_{i}(t_{0})\leq \dot{p}%
_{i}^{\prime }(t_{0}),  \label{pi_pidot_to}
\end{equation}%
since $\beta _{a}<\beta _{0}$ by definition and therefore $(\beta _{0}-\beta
_{a})q_{i}(t_{0})\sum_{j\in \mathcal{N}_{i}}a_{ij}p_{j}(t_{0})$ is a
non-negative term. According to (\ref{pi_pidot_to}), there exists $%
t_{f}>t_{0}$ so that%
\begin{equation}
p_{i}(t)\leq p_{i}^{\prime }(t),i\in \{1,...,N\}~~\forall t\in \lbrack
t_{0},t_{f}].  \label{pi_ineq_tf}
\end{equation}%
The theorem is proved if we show that inequality (\ref{pi_ineq_tf}) holds
for every $t_{f}\in (t_{0},\infty )$. Assume that there exists $t_{1}>t_{0}$%
, so that (\ref{pi_ineq_tf}) holds for $t_{f}=t_{1}$ but it is not true for
any $t_{f}>t_{1}$. Obviously, at $t=t_{1}$,%
\begin{equation}
\exists i\in \{1,...,N\}\text{ so that }p_{i}(t_{1})=p_{i}^{\prime }(t_{1})%
\text{ and }\dot{p}_{i}(t_{1})>\dot{p}_{i}^{\prime }(t_{1}).
\label{pi_contradiction}
\end{equation}%
In the subsequent arguments, it is shown that no such $t_{1}$ exists. From (%
\ref{p_dot2}), $\dot{p}_{i}(t_{1})$ is found to satisfy%
\begin{eqnarray}
\dot{p}_{i}(t_{1}) &=&\beta _{0}(1-p_{i}(t_{1}))\sum_{j\in \mathcal{N}%
_{i}}a_{ij}p_{j}(t_{1})  \notag \\
&&-(\beta _{0}-\beta _{a})q_{i}(t_{1})\sum_{j\in \mathcal{N}%
_{i}}a_{ij}p_{j}(t_{1})-\delta p_{i}(t_{1})  \notag \\
&\leq &\beta _{0}(1-p_{i}(t_{1}))\sum_{j\in \mathcal{N}%
_{i}}a_{ij}p_{j}(t_{1})-\delta p_{i}(t_{1})  \notag \\
&=&\beta _{0}(1-p_{i}^{\prime }(t_{1}))\sum_{j\in \mathcal{N}%
_{i}}a_{ij}p_{j}(t_{1})-\delta p_{i}^{\prime }(t_{1}),
\label{pidot_ineq_t1_1}
\end{eqnarray}%
according to (\ref{pi_contradiction}) and the fact that $(\beta _{0}-\beta
_{a})q_{i}(t_{1})\sum a_{ij}p_{j}(t_{1})$\ is a non-negative term. Based on (%
\ref{pi_ineq_tf}), $\forall j\in \{1,...,N\}$ we have $p_{j}(t_{1})\leq
p_{j}^{\prime }(t_{1})$. Therefore, the inequality (\ref{pidot_ineq_t1_1})
is further simplified as%
\begin{equation}
\dot{p}_{i}(t_{1})\leq \beta _{0}(1-p_{i}^{\prime }(t_{1}))\sum_{j\in 
\mathcal{N}_{i}}a_{ij}p_{j}^{\prime }(t_{1})-\delta p_{i}^{\prime }(t_{1})=%
\dot{p}_{i}^{\prime }(t_{1}).  \label{pidot_ineq_t1_2}
\end{equation}%
Having $\dot{p}_{i}(t_{1})\leq \dot{p}_{i}^{\prime }(t_{1})$ contradicts
with (\ref{pi_contradiction}). Therefore, no such $t_{1}$ exists so that (%
\ref{pi_contradiction}) is true. As a result the inequality (\ref{pi_ineq_tf}%
) holds for every $t_{f}\in (t_{0},\infty )$. This completes the proof.
\end{proof}

\subsection{Exponential Epidemic Die-Out\label{Exp_dieout}}

\begin{theorem}
\label{Theorem: ES}Consider the SAIS epidemic spread model (\ref{dp}) and (%
\ref{dq}). Assume that the infection strength satisfies%
\begin{equation}
\tau =\frac{\beta _{0}}{\delta }<\frac{1}{\rho (A)}.  \label{Threshold1}
\end{equation}%
Then, initial infections will die out exponentially.
\end{theorem}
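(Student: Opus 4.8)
The plan is to avoid any new fixed-point or Lyapunov construction and instead lean on the comparison result already established in Theorem~\ref{Theorem: upperbound} together with the classical SIS die-out criterion of Proposition~\ref{Prop: Die out}. Concretely, I would start the two systems from the same initial data $p_i(t_0)=p_i'(t_0)$, where $p_i'$ evolves under the N-intertwined SIS model~(\ref{pidot}). Theorem~\ref{Theorem: upperbound} then gives $p_i(t)\le p_i'(t)$ for all $t\ge t_0$ and all $i$. Separately, because the $p_i$ are probabilities, the region $\{p_i\ge 0\}$ is forward invariant for~(\ref{dp})--(\ref{dq}) (on the face $p_i=0$ one checks $\dot p_i\ge 0$ since $q_i\le 1$), so in fact $0\le p_i(t)\le p_i'(t)$.

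The second ingredient is to observe that the hypothesis $\tau=\beta_0/\delta<1/\rho(A)$ implies $\tau\le 1/\rho(A)$, so Proposition~\ref{Prop: Die out} applies verbatim to the SIS trajectory and yields $p_i'(t)\to 0$ exponentially. Sandwiching, $0\le p_i(t)\le p_i'(t)$ then forces $p_i(t)\to 0$ at least as fast, which is exactly the claimed exponential die-out.

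If a self-contained argument is preferred over quoting Proposition~\ref{Prop: Die out}, I would instead work directly from the rewritten dynamics~(\ref{p_dot2}). Dropping the non-negative alertness term $(\beta_0-\beta_a)q_i\sum_j a_{ij}p_j$ and using $1-p_i\le 1$ yields the componentwise differential inequality $\dot p\le(\beta_0 A-\delta I)\,p$, with $p=[p_1,\dots,p_N]^T$. Since $\beta_0 A-\delta I$ is a Metzler matrix (non-negative off-diagonal entries), the comparison principle for cooperative systems gives $0\le p(t)\le e^{(\beta_0 A-\delta I)(t-t_0)}p(t_0)$. Exponential decay of the right-hand side then reduces to showing that the spectral abscissa of $\beta_0 A-\delta I$ is negative, i.e. $\beta_0\rho(A)-\delta<0$, which is precisely~(\ref{Threshold1}).

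The main obstacle I anticipate is this last spectral point, together with the monotonicity it rests on. One must justify that $\rho(A)$ genuinely controls the largest real part of the eigenvalues of $\beta_0 A-\delta I$: for an undirected (symmetric) $A$ this is immediate since all eigenvalues are real, but for a general directed contact graph it requires Perron--Frobenius, which guarantees that the non-negative matrix $A$ has $\rho(A)$ as an eigenvalue dominating the real parts of all others. Equally, the componentwise comparison step is valid only because the bounding matrix is Metzler; this is where care is needed, and it is the same structural fact that ultimately underlies Theorem~\ref{Theorem: upperbound}.
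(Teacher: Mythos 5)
Your first argument is exactly the paper's own proof: start both systems from the same initial data, invoke Theorem~\ref{Theorem: upperbound} to dominate the SAIS infection probabilities by the SIS trajectory of~(\ref{pidot}), and then apply Proposition~\ref{Prop: Die out} to that trajectory to conclude exponential die-out by sandwiching. The self-contained Metzler/Perron--Frobenius alternative you sketch is sound but unnecessary for matching the paper, which stops at the comparison argument.
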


\begin{proof}
The solution of $p_{i}(t)$ was proved in Theorem \ref{Theorem: upperbound}
to be upper-bounded by $p_{i}^{\prime }(t)$. As described in Section \ref%
{Sec: NI_Model} and based on Proposition \ref{Prop: Die out}, the
N-intertwined model (\ref{pidot}) is exponentially stable if (\ref%
{Threshold1}) is satisfied. As a consequence, $p_{i}(t)$ in (\ref{p_dot2})
is also exponentially stable if (\ref{Threshold1}) is satisfied.
\end{proof}

\begin{remark}
In the proof of Theorem \ref{Theorem: upperbound}, $q_{i}$ is considered as
a non-negative time-varying term. Under the conditions of Theorem \ref%
{Theorem: ES}, $q_{i}(t)$ will regulate at some value depending on the
initial conditions.
\end{remark}

\begin{remark}
Note that adding the alert compartment does not contribute to the epidemic
threshold for exponential die out. This result is already concluded in \cite%
{funk2009NAS} for a homogeneous network (i.e. all nodes have the same
degree).
\end{remark}

\subsection{Asymptotically Epidemic Die-Out}

According to (\ref{dq}),

\begin{equation}
q_{i}^{e}=\frac{1-p_{i}}{1+\frac{\beta _{a}}{\kappa }},~i\in \{1,...,N\},
\end{equation}%
is an equilibrium for (\ref{dq}). To facilitate the subsequent analysis,
define a new state $r_{i}$ as%
\begin{equation}
r_{i}\triangleq q_{i}-q_{i}^{e}=q_{i}-\frac{1-p_{i}}{1+\frac{\beta _{a}}{%
\kappa }}.  \label{ri}
\end{equation}%
The derivatives $\dot{p}_{i}$ and $\dot{r}_{i}$ in the new coordinate can be
found by substituting $q_{i}=r_{i}+\frac{1}{1+\frac{\beta _{a}}{\kappa }}-%
\frac{p_{i}}{1+\frac{\beta _{a}}{\kappa }}$ from (\ref{ri}) in (\ref{dp})
and (\ref{dq}) as%
\begin{eqnarray}
\dot{p}_{i} &=&\beta _{0}(1-p_{i}-\{r_{i}+\frac{1}{1+\frac{\beta _{a}}{%
\kappa }}-\frac{p_{i}}{1+\frac{\beta _{a}}{\kappa }}\})\sum_{j\in \mathcal{N}%
_{i}}a_{ij}p_{j}  \notag \\
&&+\beta _{a}\{r_{i}+\frac{1}{1+\frac{\beta _{a}}{\kappa }}-\frac{p_{i}}{1+%
\frac{\beta _{a}}{\kappa }}\}\sum_{j\in \mathcal{N}_{i}}a_{ij}p_{j}-\delta
p_{i}  \notag \\
&=&\{\beta _{0}\frac{\frac{\beta _{a}}{\kappa }}{1+\frac{\beta _{a}}{\kappa }%
}+\beta _{a}\frac{1}{1+\frac{\beta _{a}}{\kappa }}\}\sum_{j\in \mathcal{N}%
_{i}}a_{ij}p_{j}  \notag \\
&&-\{\beta _{0}+\frac{\beta _{0}+\beta _{a}}{1+\frac{\beta _{a}}{\kappa }}%
\}p_{i}\sum_{j\in \mathcal{N}_{i}}a_{ij}p_{j}  \notag \\
&&-(\beta _{0}-\beta _{a})r_{i}\sum_{j\in \mathcal{N}_{i}}a_{ij}p_{j}-\delta
p_{i},  \label{pdot_cm}
\end{eqnarray}%
and%
\begin{eqnarray}
\dot{r}_{i} &=&\kappa (1-p_{i}-\{r_{i}+\frac{1}{1+\frac{\beta _{a}}{\kappa }}%
-\frac{p_{i}}{1+\frac{\beta _{a}}{\kappa }}\})\sum_{j\in \mathcal{N}%
_{i}}a_{ij}p_{j}  \notag \\
&&-\beta _{a}\{r_{i}+\frac{1}{1+\frac{\beta _{a}}{\kappa }}-\frac{p_{i}}{1+%
\frac{\beta _{a}}{\kappa }}\}\sum_{j\in \mathcal{N}_{i}}a_{ij}p_{j}  \notag
\\
&=&-\kappa (1+\frac{\beta _{a}}{\kappa })r_{i}\sum_{j\in \mathcal{N}%
_{i}}a_{ij}p_{j}.  \label{rdot_cm}
\end{eqnarray}

To facilitate the subsequent analysis, define%
\begin{eqnarray}
\mathbf{p} &\triangleq &[p_{1},...,p_{N}]^{T}\in 
\mathbb{R}
^{N},  \label{p_v} \\
\mathbf{r} &\triangleq &[r_{1},...,r_{N}]^{T}\in 
\mathbb{R}
^{N}.  \label{r_v}
\end{eqnarray}%
According to (\ref{pdot_cm}) and (\ref{rdot_cm}) and the definitions (\ref%
{p_v}) and (\ref{r_v}), the followings are true%
\begin{eqnarray}
\mathbf{\dot{p}} &=&(\beta _{eq}A-\delta I)\mathbf{p}+G_{1}(\mathbf{p},%
\mathbf{r}),  \label{p_v_dot} \\
\mathbf{\dot{r}} &=&\mathbf{0r}+G_{2}(\mathbf{p},\mathbf{r}),
\label{r_v_dot}
\end{eqnarray}%
where%
\begin{equation}
\beta _{eq}\triangleq \beta _{0}\frac{\frac{\beta _{a}}{\kappa }}{1+\frac{%
\beta _{a}}{\kappa }}+\beta _{a}\frac{1}{1+\frac{\beta _{a}}{\kappa }},
\label{bet_eq}
\end{equation}%
and%
\begin{eqnarray}
G_{1}(\cdot ) &\triangleq &[g_{1,1}(\cdot ),...,g_{1,N}(\cdot )]^{T},
\label{G1} \\
G_{2}(\cdot ) &\triangleq &[g_{2,1}(\cdot ),...,g_{2,N}(\cdot )]^{T},
\label{G2}
\end{eqnarray}%
with%
\begin{eqnarray}
g_{1,i}(\mathbf{p},\mathbf{r}) &\triangleq &-\{\beta _{0}+\frac{\beta
_{0}+\beta _{a}}{1+\frac{\beta _{a}}{\kappa }}\}p_{i}\sum_{j\in \mathcal{N}%
_{i}}a_{ij}p_{j}  \notag \\
&&-(\beta _{0}-\beta _{a})r_{i}\sum_{j\in \mathcal{N}_{i}}a_{ij}p_{j}, \\
g_{2,i}(\mathbf{p},\mathbf{r}) &\triangleq &-\kappa (1+\frac{\beta _{a}}{%
\kappa })r_{i}\sum_{j\in \mathcal{N}_{i}}a_{ij}p_{j}.
\end{eqnarray}

If we linearize the system (\ref{p_v_dot}) and (\ref{r_v_dot}), the
resulting system has $N$\ zero eigenvalues. Therefore, linearization
technique fails to investigate the stability properties of (\ref{p_v_dot})
and (\ref{r_v_dot}). In the following arguments, we show that center
manifold theory can be employed to study the stability of (\ref{p_v_dot})
and (\ref{r_v_dot}).

The eigenvalues of matrix $(\beta _{eq}A-\delta I)$ are $\beta _{eq}\lambda
_{i}-\delta ,i\in \{1,...N\}$, where $\lambda _{i}$'s are the eigenvalues of
the adjacency matrix $A$. Therefore, assuming that%
\begin{equation}
\frac{\beta _{eq}}{\delta }<\frac{1}{\rho (A)},  \label{threshold2}
\end{equation}%
the matrix $(\beta _{eq}A-\delta I)$ is Hurwitz (i.e., a matrix that all of
its eigenvalues have negative real parts). In addition, the two nonlinear
functions $G_{1}$ and $G_{2}$ defined in (\ref{G1}) and (\ref{G2}) satisfy%
\begin{equation}
G_{j}(\mathbf{0},\mathbf{0})=\mathbf{0},~\nabla G_{j}(\mathbf{0},\mathbf{0})=%
\mathbf{0,}  \label{Gi_prop}
\end{equation}%
for $j\in \{1,2\}$. Hence, the center manifold theory reviewed in Section %
\ref{Sec: CMT} may apply. The center manifold theorem suggests that there
exists a function $H(\cdot ):%
\mathbb{R}
^{N}\rightarrow 
\mathbb{R}
^{N}$ where the dynamics (\ref{p_v_dot}) and (\ref{r_v_dot}) can be
determined by%
\begin{equation}
\overset{\cdot }{\mathbf{\hat{r}}}=G_{2}(H(\mathbf{\hat{r}}),\mathbf{\hat{r}}%
).  \label{r_v_hat_dot}
\end{equation}

Differential equation (\ref{r_v_hat_dot}) can be written in terms of its
entries as%
\begin{equation}
\overset{\cdot }{\hat{r}}_{i}=-\kappa (1+\frac{\beta _{a}}{\kappa })\hat{r}%
_{i}\sum_{j\in \mathcal{N}_{i}}a_{ij}h_{j}(\mathbf{\hat{r}}),
\label{r_hat_dot}
\end{equation}%
for $i\in \{1,...,N\}$, where $h_{i}(\cdot )$ is the $i$-th component of $%
H(\cdot )\triangleq \lbrack h_{1}(\cdot ),...,h_{N}(\cdot )]^{T}.$

\begin{remark}
\label{Remark: hi_pos}Usually, it is not feasible to find $h_{i}(\cdot )$
explicitly. In the subsequent analysis, instead of explicit calculations, we
make use of the following property of $h_{i}(\cdot )$: Since the probability 
$p_{i}$ is non-negative, each function $h_{i}(\cdot )$ is necessarily
non-negative.
\end{remark}

\begin{lemma}
\label{Lemma: r_hat}The trajectories of (\ref{r_hat_dot}) will
asymptotically converge to the set defined by%
\begin{equation}
\Omega =\{\mathbf{\hat{r}\in 
\mathbb{R}
}^{N}\mathbf{|}\hat{r}_{i}\sum_{j\in \mathcal{N}_{i}}a_{ij}h_{j}(\mathbf{%
\hat{r}})=0\}.
\end{equation}
\end{lemma}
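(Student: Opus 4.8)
The plan is to combine a simple quadratic Lyapunov function with LaSalle's invariance principle. First I would exploit the sign structure of the reduced dynamics (\ref{r_hat_dot}). The scalar coefficient $\kappa(1+\frac{\beta_a}{\kappa})=\kappa+\beta_a$ is strictly positive, and, abbreviating $c_i(\mathbf{\hat{r}})\triangleq\sum_{j\in\mathcal{N}_i}a_{ij}h_j(\mathbf{\hat{r}})$, Remark \ref{Remark: hi_pos} guarantees $h_j(\cdot)\ge 0$, so that $c_i(\mathbf{\hat{r}})\ge 0$ everywhere. The dynamics therefore read $\dot{\hat{r}}_i=-(\kappa+\beta_a)\,\hat{r}_i\,c_i(\mathbf{\hat{r}})$. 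In particular, at any point of $\Omega$---where $\hat{r}_i c_i(\mathbf{\hat{r}})=0$ for every $i$---the right-hand side of (\ref{r_hat_dot}) vanishes, so $\Omega$ consists entirely of equilibria and is thus an invariant set.

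Next I would propose the candidate Lyapunov function $V(\mathbf{\hat{r}})\triangleq\frac{1}{2}\left\Vert\mathbf{\hat{r}}\right\Vert^2=\frac{1}{2}\sum_{i=1}^N\hat{r}_i^2$. Differentiating along trajectories of (\ref{r_hat_dot}) gives
\begin{equation}
\dot{V}=\sum_{i=1}^N\hat{r}_i\dot{\hat{r}}_i=-(\kappa+\beta_a)\sum_{i=1}^N\hat{r}_i^2\,c_i(\mathbf{\hat{r}})\le 0,
\end{equation}
since every summand is a product of non-negative factors. In particular $\left\Vert\mathbf{\hat{r}}(t)\right\Vert$ is non-increasing, so each trajectory remains in the compact ball $\{\left\Vert\mathbf{\hat{r}}\right\Vert\le\left\Vert\mathbf{\hat{r}}(t_0)\right\Vert\}$; this boundedness is precisely what is required to invoke LaSalle's invariance principle (see \cite{khalil2002Book}).

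The decisive step is to identify the set on which $\dot{V}=0$. Because $\hat{r}_i^2 c_i(\mathbf{\hat{r}})=0$ holds if and only if $\hat{r}_i c_i(\mathbf{\hat{r}})=0$, the set $\{\dot{V}=0\}$ coincides exactly with $\Omega$. Since $\Omega$ was already seen to consist entirely of equilibria, it is itself the largest invariant set contained in $\{\dot{V}=0\}$. LaSalle's invariance principle then yields that every bounded trajectory of (\ref{r_hat_dot}) converges to $\Omega$, which is the assertion of the lemma.

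I expect the only delicate point to be the reliance on the qualitative properties of the (generally unknown) center-manifold map $H$: the non-negativity $h_j(\cdot)\ge 0$ from Remark \ref{Remark: hi_pos} is exactly what renders $\dot{V}$ sign-definite, and the continuity of $H$ is what makes each $c_i$ continuous so that LaSalle's theorem is applicable. Without the sign property the Lyapunov derivative would be indefinite and the whole scheme would collapse; by comparison, the Lyapunov computation and the set-identification are routine.
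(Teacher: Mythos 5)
Your proof is correct and follows essentially the same route as the paper: the quadratic Lyapunov function $V=\frac{1}{2}\mathbf{\hat{r}}^{T}\mathbf{\hat{r}}$, the sign of $\dot{V}$ via the non-negativity of $h_{j}(\cdot)$ from Remark \ref{Remark: hi_pos}, and LaSalle's invariance principle. You additionally supply two details the paper leaves implicit---boundedness of trajectories (needed to invoke LaSalle) and the fact that $\Omega$ consists of equilibria, hence equals the largest invariant set in $\{\dot{V}=0\}$---which strengthens rather than changes the argument.
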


\begin{proof}
Define a continuously differentiable function $V$ as%
\begin{equation}
V\triangleq \frac{1}{2}\mathbf{\hat{r}}^{T}\mathbf{\hat{r}}.
\end{equation}%
Taking the derivative of $V$ with respect to time, we have%
\begin{equation}
\dot{V}=\sum_{i=1}^{N}\hat{r}_{i}\overset{\cdot }{\hat{r}}_{i}=-\kappa (1+%
\frac{\beta _{a}}{\kappa })\sum_{i=1}^{N}\left( \hat{r}_{i}^{2}\sum_{j\in 
\mathcal{N}_{i}}a_{ij}h_{j}(\mathbf{\hat{r}})\right) .
\end{equation}%
It can be seen that the time derivative $\dot{V}$ is negative semi-definite
according to Remark \ref{Remark: hi_pos}. According to the LaSalle's
invariance theorem (see \cite{khalil2002Book}) the trajectories of (\ref%
{r_hat_dot}) will asymptotically converge to the set $\dot{V}\equiv 0$, i.e.,%
\begin{equation}
\Omega \triangleq \{\mathbf{\hat{r}\in 
\mathbb{R}
}^{N}\mathbf{|}\hat{r}_{i}\sum_{j\in \mathcal{N}_{i}}a_{ij}h_{j}(\mathbf{%
\hat{r}})=0\}.
\end{equation}
\end{proof}

\begin{theorem}
\label{Theorem: AS}Consider the SAIS epidemic model (\ref{dp}) and (\ref{dq}%
). Assume that the infection strength satisfies (\ref{threshold2}) where $%
\beta _{eq}$ is defined in (\ref{bet_eq}). Small initial infections die out
asymptotically as $t\rightarrow \infty $.
\end{theorem}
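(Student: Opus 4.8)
The plan is to exploit the center manifold reduction already constructed in (\ref{p_v_dot})--(\ref{r_v_hat_dot}) together with Lemma \ref{Lemma: r_hat}, and then to prove that \emph{every} equilibrium the reduced flow can approach has a vanishing infection component. Under hypothesis (\ref{threshold2}) the matrix $\beta_{eq}A-\delta I$ is Hurwitz, so $\mathbf{p}$ is the stable variable and $\mathbf{r}$ the center variable; since $G_1,G_2$ satisfy (\ref{Gi_prop}), the center manifold theorem of Section \ref{Sec: CMT} applies. Hence, for a \emph{small} initial infection (initial data in a neighborhood of the origin), the trajectory is attracted exponentially to the manifold $\mathbf{p}=H(\mathbf{r})$, and its long-time behavior is governed by the reduced system (\ref{r_hat_dot}). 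Consequently $\Vert\mathbf{p}(t)-H(\hat{\mathbf{r}}(t))\Vert\to 0$ exponentially, and it suffices to show $h_i(\hat{\mathbf{r}}(t))\to 0$ for every $i$.

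First I would identify the limit set of the reduced flow. By Lemma \ref{Lemma: r_hat}, $\hat{\mathbf{r}}(t)$ converges to $\Omega$; on $\Omega$ each $\hat r_i\sum_{j}a_{ij}h_j=0$, so by (\ref{r_hat_dot}) $\dot{\hat{\mathbf{r}}}=\mathbf{0}$, i.e.\ $\Omega$ consists entirely of equilibria. Let $\hat{\mathbf{r}}^\ast$ be any point of the $\omega$-limit set of the reduced trajectory. Because the manifold is invariant, the corresponding point $(\mathbf{p}^\ast,\hat{\mathbf{r}}^\ast)$ with $\mathbf{p}^\ast=H(\hat{\mathbf{r}}^\ast)\geq\mathbf{0}$ (Remark \ref{Remark: hi_pos}) is a genuine equilibrium of (\ref{p_v_dot})--(\ref{r_v_dot}), equivalently of (\ref{dp})--(\ref{dq}); it then suffices to show $\mathbf{p}^\ast=\mathbf{0}$ for every such limit point.

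The crucial step is exactly this. Reading (\ref{dq}) at equilibrium gives, for each $i$, $\sum_{j}a_{ij}p_j^\ast\,[\kappa(1-p_i^\ast-q_i^\ast)-\beta_a q_i^\ast]=0$, so either $\sum_{j}a_{ij}p_j^\ast=0$ or $q_i^\ast=q_i^e$ (i.e.\ $r_i^\ast=0$). In the former case (\ref{dp}) collapses to $-\delta p_i^\ast=0$, forcing $p_i^\ast=0$. Hence the support $I_+\triangleq\{i:p_i^\ast>0\}$ contains only nodes with $r_i^\ast=0$, and substituting $q_i^\ast=q_i^e$ into (\ref{dp}) reduces the equilibrium condition on $I_+$ to the N-intertwined form $\frac{\beta_{eq}}{\delta}\sum_{j}a_{ij}p_j^\ast=\frac{p_i^\ast}{1-p_i^\ast}$, which is precisely (\ref{NI_pss}) of Proposition \ref{Prop: NI_SS} with $\beta_0$ replaced by $\beta_{eq}$ from (\ref{bet_eq}).

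Finally I would rule out a nontrivial support. Assume $I_+\neq\varnothing$. Since neighbors outside $I_+$ carry $p_j^\ast=0$, the entrywise positive vector $\mathbf{p}^\ast_{I_+}$ satisfies $\frac{\beta_{eq}}{\delta}(A_{I_+}\mathbf{p}^\ast_{I_+})_i=\frac{p_i^\ast}{1-p_i^\ast}>p_i^\ast$ for all $i\in I_+$, where $A_{I_+}$ is the principal submatrix on $I_+$; thus $\frac{\beta_{eq}}{\delta}A_{I_+}\mathbf{p}^\ast_{I_+}>\mathbf{p}^\ast_{I_+}$ entrywise. The Collatz--Wielandt characterization of the spectral radius of the nonnegative matrix $A_{I_+}$ then yields $\frac{\beta_{eq}}{\delta}\rho(A_{I_+})>1$, and since $\rho(A_{I_+})\leq\rho(A)$ this contradicts (\ref{threshold2}). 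Therefore $I_+=\varnothing$, every $\omega$-limit point has $\mathbf{p}^\ast=\mathbf{0}$, so $H(\hat{\mathbf{r}}(t))\to\mathbf{0}$ and hence $\mathbf{p}(t)\to\mathbf{0}$. I expect this last step---excluding a nontrivial equilibrium below the $\beta_{eq}$ threshold via a Perron--Frobenius estimate on the support submatrix---to be the main obstacle, the rest being bookkeeping. The convergence is only \emph{asymptotic} rather than exponential because the decay rate $\kappa(1+\tfrac{\beta_a}{\kappa})\sum_j a_{ij}h_j(\hat{\mathbf{r}})$ in (\ref{r_hat_dot}) itself tends to zero as $\mathbf{p}\to\mathbf{0}$.
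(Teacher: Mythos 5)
Your proposal is correct, and it shares the paper's scaffolding---the $(\mathbf{p},\mathbf{r})$ decomposition (\ref{p_v_dot})--(\ref{r_v_dot}), the Hurwitz property of $\beta_{eq}A-\delta I$ under (\ref{threshold2}), the center manifold reduction (\ref{r_v_hat_dot}), and Lemma \ref{Lemma: r_hat}---but your decisive step is genuinely different. The paper stops essentially at Lemma \ref{Lemma: r_hat}: it observes that $\hat r_i\sum_j a_{ij}h_j(\hat{\mathbf r})\to 0$ (the ``$\to\infty$'' in the printed proof is a typo), declares the cross term $r_i\sum_j a_{ij}p_j$ in (\ref{pdot_cm}) a decaying disturbance acting on the exponentially stable $\mathbf p$-subsystem, and concludes $\mathbf p\to\mathbf 0$. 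You instead push the reduced flow to its $\omega$-limit set: you note that $\Omega$ consists of equilibria, lift each limit point to an equilibrium of (\ref{dp})--(\ref{dq}) via invariance of the manifold, rerun what is in effect the paper's own steady-state algebra from Section on persistence (the dichotomy $\sum_j a_{ij}p_j^\ast=0$ versus $q_i^\ast=q_i^e$, leading to the analogue of (\ref{pss3})), and exclude a nontrivial support by the Collatz--Wielandt bound $\tfrac{\beta_{eq}}{\delta}\rho(A_{I_+})>1$, which contradicts (\ref{threshold2}) since $\rho(A_{I_+})\le\rho(A)$. Your route buys a more rigorous endgame: the paper's ``decaying disturbance'' step is informal (the disturbance contains $\mathbf p$ itself, so one really needs a cascade/vanishing-perturbation argument for the exponentially stable subsystem), whereas your Perron--Frobenius exclusion of nonzero equilibria is airtight and, as a byproduct, shows the disease-free state is the only steady state below $\tau_c^2$, nicely complementing Theorem \ref{Theorem: SS}. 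The paper's route buys brevity. Both arguments rest equally on the unproved positivity of $H$ (Remark \ref{Remark: hi_pos}) and on the local validity of the reduction principle for small initial data, so neither is more exposed on those points.
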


\begin{proof}
Since the infection strength satisfies (\ref{threshold2}), the matrix $%
(\beta _{eq}A-\delta I)$ is Hurwitz. According to the property (\ref{Gi_prop}%
) of $G_{1}(\mathbf{p},\mathbf{r})$, the system%
\begin{equation*}
\mathbf{\dot{p}}=(\beta _{eq}A-\delta I)\mathbf{p}+G_{1}(\mathbf{p},\mathbf{0%
}),
\end{equation*}%
which is system (\ref{p_v_dot}) with $\mathbf{r}=\mathbf{0}$, is
exponentially stable. In addition, according to Lemma \ref{Lemma: r_hat}, $%
\hat{r}_{i}\sum_{j\in \mathcal{N}_{i}}a_{ij}h_{j}(\mathbf{\hat{r}}%
)\rightarrow \infty $ as $t\rightarrow \infty $. Therefore, the term $%
r_{i}\sum a_{ij}p_{j}$ in (\ref{pdot_cm}) can be considered as a decaying
disturbance for (\ref{p_v_dot}). Therefore, $p_{i}\rightarrow 0$
asymptotically as $t\rightarrow \infty $.
\end{proof}

\begin{remark}
\label{two_thresh}From Theorem \ref{Theorem: ES}, the first epidemic
threshold is%
\begin{equation}
\tau _{c}^{1}=\frac{1}{\rho (A)},  \label{taw_1}
\end{equation}%
which is equal to the epidemic threshold in the classic SIS\ epidemic
network. If the infection rate $\beta _{a}$ is such that%
\begin{equation}
\frac{\beta _{a}}{\delta }<\frac{1}{\rho (A)},  \label{bamu}
\end{equation}%
the ratio $\frac{\beta _{eq}}{\delta }$ can be larger or smaller than $\frac{%
1}{\rho (A)}$, depending on the value of $\beta _{0}$. Therefore, if (\ref%
{bamu}) holds, Theorem \ref{Theorem: AS}\ suggests that there exists another
epidemic threshold $\tau _{c}^{2}$. Using the definition of $\beta _{eq}$ in
(\ref{bet_eq}), the condition (\ref{threshold2}) in Theorem \ref{Theorem: AS}%
\ can be expressed as%
\begin{equation}
\frac{\beta _{eq}}{\delta }=\frac{\beta _{0}}{\delta }\frac{\frac{\beta _{a}%
}{\kappa }}{1+\frac{\beta _{a}}{\kappa }}+\frac{\beta _{a}}{\delta }\frac{1}{%
1+\frac{\beta _{a}}{\kappa }}\leq \frac{1}{\rho (A)},
\end{equation}%
which is equivalent to%
\begin{eqnarray}
\frac{\beta _{0}}{\delta } &\leq &\frac{\frac{1}{\rho (A)}-\frac{\beta _{a}}{%
\delta }\frac{1}{1+\frac{\beta _{a}}{\kappa }}}{\frac{\frac{\beta _{a}}{%
\kappa }}{1+\frac{\beta _{a}}{\kappa }}}=\frac{\frac{\beta _{a}}{\kappa }+1}{%
\frac{\beta _{a}}{\kappa }}\frac{1}{\rho (A)}-\frac{\beta _{a}}{\delta }%
\frac{\kappa }{\beta _{a}}  \notag \\
&=&\frac{1}{\rho (A)}+\frac{\kappa }{\beta _{a}}(\frac{1}{\rho (A)}-\frac{%
\beta _{a}}{\delta }).  \label{thresh2}
\end{eqnarray}%
The second epidemic threshold $\tau _{c}^{2}$ can now be obtained from
inequality (\ref{thresh2}) as%
\begin{equation}
\tau _{c}^{2}=\tau _{c}^{1}+\frac{\kappa }{\beta _{a}}(\frac{1}{\rho (A)}-%
\frac{\beta _{a}}{\delta }).  \label{taw_2}
\end{equation}%
Notice that, according to (\ref{bamu}), $\tau _{c}^{2}>\tau _{c}^{1}$.
\end{remark}

\subsection{Epidemic Persistence in the Steady State}

The steady state is studied by letting the time derivatives $\dot{p}_{i}$
and $\dot{q}_{i}$ equal to zero, i.e.,%
\begin{align}
0& =\beta _{0}(1-p_{i}^{ss}-q_{i}^{ss})\sum_{j\in \mathcal{N}%
_{i}}a_{ij}p_{j}^{ss}  \notag \\
& \qquad \qquad +\beta _{a}q_{i}^{ss}\sum_{j\in \mathcal{N}%
_{i}}a_{ij}p_{j}^{ss}-\delta p_{i}^{ss},  \label{pss} \\
0& =\kappa (1-p_{i}^{ss}-q_{i}^{ss})\sum_{j\in \mathcal{N}%
_{i}}a_{ij}p_{j}^{ss}-\beta _{a}q_{i}^{ss}\sum_{j\in \mathcal{N}%
_{i}}a_{ij}p_{j}^{ss}.  \label{qss}
\end{align}

From (\ref{qss}), it is inferred that%
\begin{equation}
q_{i}^{ss}=\frac{1-p_{i}^{ss}}{1+\frac{\beta _{a}}{\kappa }}\text{ or }\sum
a_{ij}p_{j}^{ss}=0.  \label{qss2}
\end{equation}%
Equivalently, according to (\ref{qss2}), the following is true%
\begin{equation}
q_{i}^{ss}\sum a_{ij}p_{j}^{ss}=\frac{1-p_{i}^{ss}}{1+\frac{\beta _{a}}{%
\kappa }}\sum a_{ij}p_{j}^{ss}.  \label{qss3}
\end{equation}

Now, substitute for $q_{i}^{ss}\sum a_{ij}p_{j}^{ss}$ terms in (\ref{pss})
using (\ref{qss3}) to get%
\begin{align}
& \beta _{0}\frac{\frac{\beta _{a}}{\kappa }}{1+\frac{\beta _{a}}{\kappa }}%
(1-p_{i}^{ss})\sum a_{ij}p_{j}^{ss}  \notag \\
& \qquad +\beta _{a}\frac{1-p_{i}^{ss}}{1+\frac{\beta _{a}}{\kappa }}\sum
a_{ij}p_{j}^{ss}-\delta p_{i}^{ss}=  \notag \\
& \left( \beta _{0}\frac{\frac{\beta _{a}}{\kappa }}{1+\frac{\beta _{a}}{%
\kappa }}+\beta _{a}\frac{1}{1+\frac{\beta _{a}}{\kappa }}\right)
(1-p_{i}^{ss})\sum a_{ij}p_{j}^{ss}-\delta p_{i}^{ss}=0.  \label{pss2}
\end{align}

\begin{theorem}
\label{Theorem: SS}Consider the SAIS epidemic model (\ref{dp}) and (\ref{dq}%
). The steady state values of the infection probabilities of each individual
in the SAIS model is similar to those of the N-intertwined SIS epidemic
model (\ref{NInt_Model}) with a reduced infection rate $\beta _{eq}$.
\end{theorem}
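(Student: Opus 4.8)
The plan is to show that the steady-state equations governing the SAIS infection probabilities collapse exactly onto the N-intertwined SIS steady-state equations (\ref{NI_pss}) of Proposition \ref{Prop: NI_SS}, but with $\beta_0$ replaced by $\beta_{eq}$. Most of the reduction has in fact already been carried out in the derivation of (\ref{pss})--(\ref{pss2}); the proof amounts to identifying the collected coefficient as $\beta_{eq}$ and matching the two equation forms. I would begin from the steady-state relations (\ref{pss}) and (\ref{qss}), where the essential structural fact comes from (\ref{qss}): at equilibrium, for each $i$ we have the dichotomy (\ref{qss2}), namely either $q_i^{ss}=(1-p_i^{ss})/(1+\beta_a/\kappa)$ or $\sum a_{ij}p_j^{ss}=0$.

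The single point that requires care is handling both branches of (\ref{qss2}) uniformly, and the clean device for this is to pass to the product relation (\ref{qss3}): multiplying the equilibrium identity by $\sum a_{ij}p_j^{ss}$ yields an expression for $q_i^{ss}\sum a_{ij}p_j^{ss}$ that is valid in \emph{both} branches simultaneously --- in the first branch by direct substitution, and in the second branch because both sides vanish. This sidesteps the case split entirely and is really the only nontrivial step. Substituting (\ref{qss3}) for the mixed term $q_i^{ss}\sum a_{ij}p_j^{ss}$ appearing in (\ref{pss}) then produces (\ref{pss2}), in which the bracketed coefficient multiplying $(1-p_i^{ss})\sum a_{ij}p_j^{ss}$ is recognized to be precisely $\beta_{eq}$ as defined in (\ref{bet_eq}). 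Hence (\ref{pss2}) reads $\beta_{eq}(1-p_i^{ss})\sum_{j\in\mathcal{N}_i}a_{ij}p_j^{ss}=\delta p_i^{ss}$.

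Finally, I would divide through by $\delta(1-p_i^{ss})$; this division is legitimate since $p_i^{ss}=1$ would force the left-hand side of that identity to vanish while $\delta p_i^{ss}=\delta\neq 0$, a contradiction, so necessarily $p_i^{ss}<1$. The result is $\tfrac{\beta_{eq}}{\delta}\sum_{j\in\mathcal{N}_i}a_{ij}p_j^{ss}=\tfrac{p_i^{ss}}{1-p_i^{ss}}$, which is identical to the N-intertwined SIS steady-state equation (\ref{NI_pss}) with $\beta_0$ replaced by $\beta_{eq}$. Since the two systems of steady-state equations coincide, their nontrivial solution sets coincide as well, establishing the theorem. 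The main (and essentially only) obstacle is the uniform treatment of the two branches in (\ref{qss2}); the remainder is algebraic rearrangement and an appeal to Proposition \ref{Prop: NI_SS}. I would also remark that this conclusion dovetails with Theorem \ref{Theorem: AS}, since both the persistence and the asymptotic die-out behavior are controlled by $\beta_{eq}$ relative to $\rho(A)$.
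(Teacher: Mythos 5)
Your proposal is correct and follows essentially the same route as the paper: the paper derives (\ref{qss2})--(\ref{pss2}) immediately before the theorem and its proof then just identifies the collected coefficient as $\beta_{eq}$ and matches (\ref{pss3}) against (\ref{NI_pss}), exactly as you do. Your uniform treatment of the two branches via the product identity (\ref{qss3}) is precisely the paper's device, and your extra remark justifying the division by $1-p_i^{ss}$ is a small bonus in rigor, not a different argument.
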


\begin{proof}
Based on the definition of $\beta _{eq}$ in (\ref{bet_eq}), the equation (%
\ref{pss2}) is simplified to 
\begin{equation*}
\beta _{eq}(1-p_{i}^{ss})\sum a_{ij}p_{j}^{ss}-\delta p_{i}^{ss}=0,
\end{equation*}%
which can be expressed as%
\begin{equation}
\frac{\beta _{eq}}{\delta }\sum a_{ij}p_{j}^{ss}=\frac{p_{i}^{ss}}{%
1-p_{i}^{ss}}.  \label{pss3}
\end{equation}

Comparing (\ref{pss3}) with (\ref{NI_pss}) from the Proposition \ref{Prop:
NI_SS}, it is observed that the steady state values of the infection
probabilities in an SAIS epidemic network is similar to those of a SIS
epidemic network with reduced infection rate $\beta _{eq}.$
\end{proof}

\begin{remark}
The expression (\ref{bet_eq}) for $\beta _{eq}$ can be rewritten as%
\begin{equation}
\beta _{eq}=\beta _{0}-\frac{\beta _{0}-\beta _{a}}{1+\frac{\beta _{a}}{%
\kappa }}.
\end{equation}%
The above expression suggests that $\beta _{eq}$ is always less than $\beta
_{0}$ since $\beta _{a}<\beta _{0}$. It is insightful to look at the extreme
cases for the values of $\beta _{eq}$. Particularly, when the alerting rate $%
\kappa $ is very small, $\beta _{eq}\rightarrow \beta _{0}$, indicating that
alertness plays a trivial role in the epidemic spread dynamics. When the
alerting rate is very large, the reduced infection rate $\beta
_{eq}\rightarrow \beta _{a}$. Another case, which is more important from the
epidemiology point of view, is that if $\beta _{a}$ is very small, the
epidemic spread can be completely controlled.
\end{remark}

\section{Simulation Results\label{Sec-Simulation Results}}

Three examples are provided in this section. In all of the simulations, the
curing rate is fixed at $\delta =1$ so that the dimensionless time $\bar{t}%
=\delta t$ is the same as the simulation time.

\begin{example}
\label{Ex1}Consider a contact graph as represented in Fig. \ref%
{example1graph.emf}. For this network, the spectral radius is found to be $%
\rho (A)=3.1385$. The alerting rate is arbitrarily selected as $\kappa =0.1$%
. The infection rate of an alert individual $\beta _{a}$ is chosen $\beta
_{a}=0.1$. For the simulation purpose, nodes $1$, $5$, and $10$ are
initially in the infected state. Other nodes are initialized in the
susceptible state. In each simulation, the total infection fraction $\bar{p}%
(t)=\frac{1}{N}\sum_{i=1}^{N}p_{i}(t)$ is computed. In Fig. \ref%
{example1_eq.eps}, three trajectories are plotted. The trajectory (a)
corresponds to the N-intertwined SIS\ model, with $\beta _{0}=2$. Trajectory
(b) is the solution of the SAIS model (\ref{dp}) and (\ref{dq}) developed in
Section \ref{Sec-Problem Statement}. Trajectory (c) is the solution of the
SIS model but with the reduced infection rate $\beta _{eq}$ defined in (\ref%
{bet_eq}). As is expected from Theorem \ref{Theorem: upperbound}, the
infected fraction in the SAIS\ model is always less than that of the SIS\
model. In addition, as proved in Theorem \ref{Theorem: SS}, the steady state
infection fraction in the SAIS in equal to that of the SIS model with the
reduced infection rate $\beta _{eq}$.
\end{example}

\begin{figure}[h!]
\begin{center}
\leavevmode
\includegraphics[scale=0.40, trim=0cm 0cm 0cm 0cm, clip]{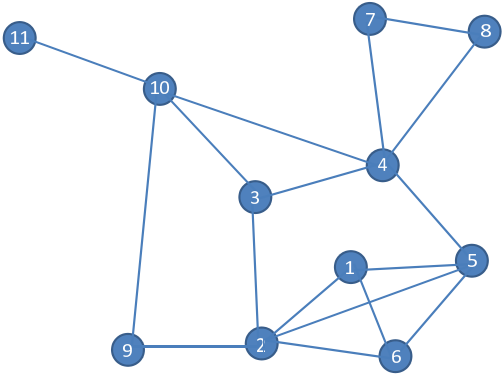}
\end{center}
\caption{The contact graph in Example ~\ref{Ex1} and Example ~\ref{Ex2}.}
\label{example1graph.emf}
\end{figure}

\begin{figure}[h!]
\includegraphics[scale=0.45]{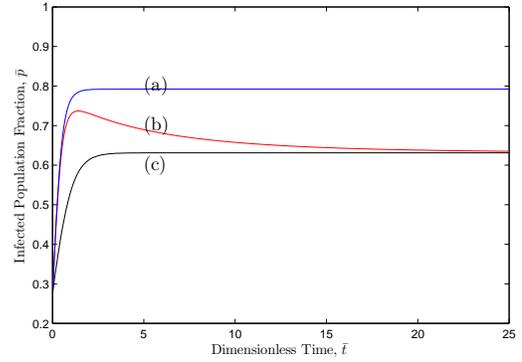}
\caption{The infected population fraction in Example ~\ref{Ex1}. (a) SIS model. (b) SAIS model. (c) SIS model with reduced infection rate $\beta_{eq}$.}
\label{example1_eq.eps}
\end{figure}

\begin{example}
\label{Ex2}In Fig. \ref{example1_eq.eps}, it can be observed that in the
SAIS model the infection spreads similar to the SIS model at the first
stage. Then, the size of the epidemics is reduced due to increased alertness
in the network. In this example, for the same network in the previous
example, the steady state value of the infected fraction and the maximum
value of the infected fraction are presented as a function of the infection
strength $\tau =\beta _{0}/\delta $. The simulation parameters are chosen as 
$\kappa =1,$ $\beta _{a}=0.1$. Note that $\beta _{a}/\delta =0.1<1/\rho
(A)=0.3186$. Therefore, as discussed in Remark \ref{two_thresh}, there
exists two distinct thresholds $\tau _{c}^{1}$ and $\tau _{c}^{2}$ presented
in (\ref{taw_1}) and (\ref{taw_2}), respectively. Simulation results for
this example are shown in Fig. \ref{example1.eps}.
\end{example}

\begin{figure}[h!]
\includegraphics[scale=0.45]{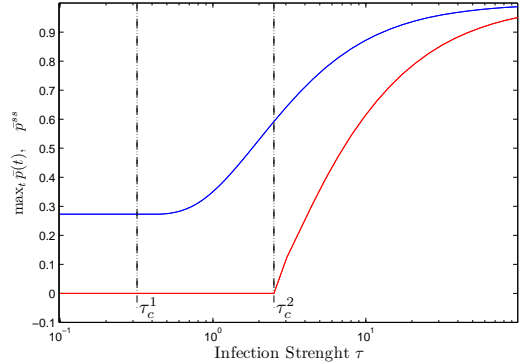}
\caption{The maximum infected fraction (blue line) and the steady state value for the infected fraction (red line) in Example ~\ref{Ex2}.}
\label{example1.eps}
\end{figure}

\begin{example}
As is observed in Fig. \ref{example1.eps}, the steady state values of the
infected fraction $\bar{p}$ is zero before the second epidemic threshold $%
\tau _{c}^{2}$. In addition, the maximum of the infected fraction is equal
to the initial infected fraction before $\tau _{c}^{1}$. The reason for this
observation is that before the first threshold $\tau _{c}^{1}$, the
epidemics dies out exponentially; as stated in Theorem \ref{Theorem: ES}.
Between the two thresholds, $\max_{t}\bar{p}(t)$ is greater than $\bar{p}(0)$
but steady state value $\bar{p}^{ss}=0$. In other words, in this region the
epidemic spreads at the first stage but then is completely controlled as a
result of increased alertness. After the second threshold, $\bar{p}%
^{ss}<\max_{t}\bar{p}(t)$, i.e., alertness reduced the size of the epidemic.
\end{example}

\begin{example}
\label{Ex3}Consider an epidemic network where the contact graph is an
Erdos-Reyni random graph with $N=320$ nodes and connection probability $p=0.2
$. The initial infected population is $\%2$ of the whole population. The
simulation parameters are $\beta _{0}=0.03$, $\kappa =0.05$. Three
trajectories are presented in Fig. \ref{er02n320_1.eps}. The trajectory (a)
is for the SIS model, i.e., no alertness exists. Trajectory (b) is for $%
\beta _{a}=0.02$. In this case, the epidemic size is reduced in the steady
state. Trajectory (c) corresponds to $\beta _{a}=0.01$, for which the
epidemic dies out asymptotically. For the sake of evaluating the model
development in Section \ref{Sec-Problem Statement}, a Monte-Carlo simulation
is also provided for each trajectory and shown in the figure in blue. As can
be seen, there is a reasonable agreement between the proposed model (\ref{dp}%
) and (\ref{dq}) and the Markov process (\ref{Makov1_4}).
\end{example}

\begin{figure}[h!]
\includegraphics[scale=0.45]{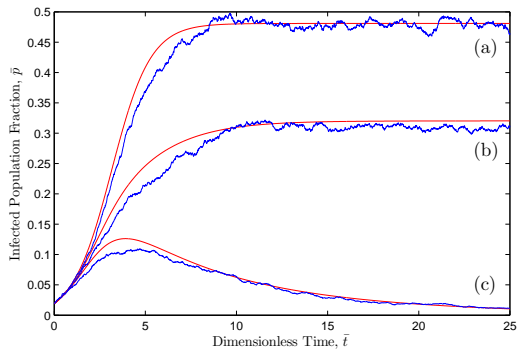}
\caption{The infected population fraction in Example ~\ref{Ex3}. (a) SIS model. (b) SAIS model with $\beta_{a}=0.02$. (c) SAIS model with $\beta_{a}=0.01$. The Monte-Carlo simulation results are shown in blue.}
\label{er02n320_1.eps}
\end{figure}

\section{Acknowledgement}

This research is supported by National Agricultural Biosecurity Center at
Kansas State University. Authors would also like to thank Dr. Fahmida N.
Chowdhury for her constructive feedbacks on this manuscript.

\section{Conclusion}

In this paper, we add a new compartment to the classic SIS model to account
for human response to epidemic spread. Each individual can be infected,
susceptible, or alert. Susceptible individuals can become alert with an
alerting rate if infected individuals exist in their neighborhood. An
individual in the alert state is less probable to become infected than an
individual in the susceptible state; due to a newly adopted cautious
behavior. The problem is formulated as a continuous time Markov process on a
general static graph and then modeled into a set of ordinary differential
equations using mean field approximation method and the corresponding
Kolmogorov forward equations. The model is then studied using results from
algebraic graph theory and center manifold theorem. We analytically show
that our model exhibits two distinct thresholds in the dynamics of epidemic
spread. Below the first threshold, infection dies out exponentially. Beyond
the second threshold, infection persists in the steady state. Between the
two thresholds, the infection spreads at the first stage but then dies out
asymptotically as the result of increased alertness in the network. Finally,
simulations are provided to support our findings. Our results suggest that
alertness can be considered as a strategy of controlling the epidemics which
propose multiple potential areas of applications, from infectious diseases
mitigations to malware impact reduction. Generalizing the current results to
time-varying weighted topologies is a promising extension.

\bibliographystyle{IEEEtran}
\bibliography{AlertEpidemic}

\end{document}